\documentclass{article}
\usepackage{graphicx} 
\usepackage{amssymb}
\usepackage{stmaryrd}
\usepackage{mathrsfs}
\usepackage{graphicx}
\usepackage{amsmath}
\usepackage{amsthm}
\usepackage{caption}
\usepackage{subcaption}
\usepackage{listings}
\usepackage{verbatim}
\usepackage{braket}
\usepackage{subfiles}
\usepackage{bm}
\usepackage{tikz}
\usepackage{verbatim}
\usepackage[colorlinks]{hyperref}
\usepackage{color}
\usepackage{graphicx}

	\newtheorem{thr}{Theorem} 

	\newtheorem{cor}{Corollary}

	\newtheorem{prp}{Proposition}

  \newcommand{\lem}{\stackrel{\ast}{<}}
	\newcommand{\gem}{\stackrel{\ast}{>}}

    \newcommand{\ch}{\mathcal{H}}

	\newtheorem{clm}{Claim}

	\newcommand{\lea}{<^+}
	\newcommand{\gea}{>^+}
	\newcommand{\eqa}{=^+}
		
	\newcommand{\lel}{<^{\log}}
	\newcommand{\gel}{>^{\log}}

    \renewcommand{\H}{\mathbf{H}}

  \newcommand\N{\mathbb{N}}

 \newcommand{\K}{\mathbf{K}}
 \newcommand{\m}{\mathbf{m}}
 
 \newcommand{\I}{\mathbf{I}}
 \newcommand\BT{\{0,1\}}
 	\newcommand\FS{\BT^*}\newcommand\IS{\BT^\infty}
	
 \newcommand\ceil[1]{{\lceil#1\rceil}}
 
\title{\vspace*{-1cm}On Kolmogorov Structure Functions}
\author{Sam Epstein\footnote{samepst@icloud.com, www.jptheorygroup.com}}
\date{\today}
\begin{document}
\maketitle
\begin{abstract}
All strings with low mutual information with the halting sequence will have flat Kolmogorov Structure Functions, in the context of Algorithmic Statistics. Assuming the Independence Postulate, strings with non-negligible information with the halting sequence are purely mathematical constructions, and cannot be found in nature. Thus Algorithmic Statistics does not study strings in the physical world. This leads to the general thesis that two part codes require limitations as shown in the Minimum Description Length Principle. We also discuss issues with set-restricted Kolmogorov Structure Functions.
\end{abstract}
\section{Introduction}
In statistics, one tries to determine a model (such as a parameter for a distribution) from data which is assumed to have noise. In the Minimum Description Principle \cite{Grunwald07}, the model that describes information with the shortest code is assumed to be the best model. The data is described as a two part code, where the first part is the model and the second part is the noise. In one of his last works, Kolmogorov suggested a two part code for individual strings $x\in\FS$ based off Kolmogorov Complexity. The first part (the model) is a set $D$ containing $x$, the second part (the noise) is the code of $x$ given $D$, of size $\ceil{\log |D|}$. Other works examined probabilities and also total computable functions as models \cite{Vitanyi02}. Kolmogorov suggested the following \textit{structure function} at the Tallinn conference in Estonia, 1973.
$$
\H_k(x) = \min\{\log |S| : x\in S, \K(S)\leq k\}.
$$
The function $\K$ is the prefix Kolmogorov complexity. Theorem 1 of \cite{Vereshchagin2017} showed that any shape of the structure function is possible. This definition is used for the following function, which is a central definition of \textit{Algorithmic Statistics} \cite{Vereshchagin15,Vereshchagin2017,VereshchaginViSt04},
$$
k \mapsto k + \H_k(x)-\K(x).
$$
This function's equivalence to several other definitions is the main theorem of Algorithmic Statistics \cite{SemenovShVe24}. 

The structure function is flat for all strings with low mutual information with the halting sequence. Assuming the \textit{Independence Postulate}, \cite{Levin84,Levin13}, strings with non-negligible mutual information with the halting sequence are exotic, in that they cannot be found in nature. Such strings are purely mathematical constructions.
\section{Bounds}
\label{sec:bounds}

We review the results of \cite{GacsTrVi01}, in particular Theorem of III.24, which I don't think is widely known. $\m(x)$ is the algorithmic probability. The amount of information that the halting sequence $\ch\in\IS$ has about $x\in\FS$ is $\I(x;\ch)=\K(x)-\K(x|\ch)$. We use $x\lea y$, $x\gea y$ and $x\eqa y$ to denote $x< y+O(1)$, $x+O(1)>y$ and $x=y\pm O(1)$, respectively. In addition, $x\lel y$ and $x\gel y$ denote $x<y+O(\log y)$ and $x+O(\log x) > y$, respectively. Furthermore, ${\lem}f$, ${\gem}f$ denotes $< O(1)f$ and $>f/O(1)$.  For $x,y\in\FS$, $x\sqsubseteq y$ if $y=xz$ for some $z\in\FS$. $[A]=1$ if mathematical statement $A$ is true, and $[A]=0$ otherwise.

Let $S_k=\{x:\K(x)\leq k\}.$ Let $N_k = |S_k|$ where $\log N_k\eqa k - \K(k)$, due to \cite{GacsTrVi01}. Let $I^x_k$ be the index of $x$ in an enumeration of $S_k$. For $\K(x)=k$, let $m_x$ be the longest joint prefix of $I^x_k$ and $N_k$. So $m_x0\sqsubseteq I^x_k$ and $m_x1\sqsubseteq N_k$. Let $S_x = \{y:m_x0\sqsubseteq I^y_k\}.$ So
\begin{align*}
 \log |S_x| &\eqa k - \K(k) - \| m_x\| \\
 \K(S_x) & \lea \K(k) + \K(m_x) \lea \K(k) + \|m_x\| + \K(\|m_x\|).
\end{align*}
\begin{thr}[\cite{GacsTrVi01}]
\label{thr:main}
$$\|m_x\| <\K(\K(x))+\I(x;\ch)+O(\log \I(x;\ch)).$$
\end{thr}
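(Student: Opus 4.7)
Write $k=\K(x)$. The plan is to upper bound $\K(x|\ch)$ by exhibiting a short description of $x$ relative to $\ch$, then match it against $\K(x|\ch)=\K(x)-\I(x;\ch)=k-\I(x;\ch)$ and solve for $\|m_x\|$.

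The key computational observation is that, with $\ch$ as an oracle, the set $S_k$ can be enumerated to completion from $k$, and hence $N_k$ is computable from $(\ch,k)$. Since $m_x$ is by definition a prefix of $N_k$, the pair $(k,\|m_x\|)$ determines $m_x$ once $\ch$ is available. Given $m_x$ we can enumerate $S_x$ and single out $x$ inside it by an index of length $\lceil\log|S_x|\rceil\eqa k-\K(k)-\|m_x\|$. Concatenating self-delimiting codes for $k$ and $\|m_x\|$ with this index yields
\[
\K(x|\ch)\ \lea\ \K(k)+\K(\|m_x\|)+k-\K(k)-\|m_x\|.
\]

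The $\K(k)$ contributions cancel. Substituting $\K(x|\ch)=k-\I(x;\ch)$ and rearranging gives the self-referential inequality
\[
\|m_x\|\ \lea\ \K(\|m_x\|)+\I(x;\ch).
\]
A standard dichotomy closes it: either $\|m_x\|\leq 2\I(x;\ch)$, in which case $\K(\|m_x\|)=O(\log\I(x;\ch))$ and we are done; or $\|m_x\|>2\I(x;\ch)$, in which case the inequality forces $\|m_x\|\lea\log\|m_x\|$, driving $\|m_x\|$ down to $O(1)$. Either way $\|m_x\|\lea\I(x;\ch)+O(\log\I(x;\ch))$, which implies the stated bound with room to spare for the $\K(\K(x))$ term. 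The main obstacle --- and where the logarithmic slack in the statement arises --- is precisely this self-referential closure: $\K(\|m_x\|)$ cannot be replaced by its generic $O(\log\|m_x\|)$ upper bound until some a priori control on $\|m_x\|$ is established.
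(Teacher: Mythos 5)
Your proposal is correct, and it takes a concretely different route from the paper's. The paper builds the semimeasure $\nu(y)=c\,[\K(y)\leq k]\,\m(y)2^{\|m_y\|}/\|m_y\|^2$, checks that it is computable relative to $(\ch,k)$, and invokes the coding theorem to get $\K(x|\ch,k)\lea \K(x)-\|m_x\|+2\log\|m_x\|$; deconditioning on $k$ then costs $\K(k)=\K(\K(x))$, which is where that term in the statement comes from, and the same self-referential closure you carry out explicitly (it is left implicit in the paper) absorbs the $2\log\|m_x\|$. You instead bound $\K(x|\ch)$ directly with an explicit prefix code consisting of $k$, $\|m_x\|$, and the index of $x$ in $S_x$. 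This is more elementary (no coding theorem needed) and, more interestingly, the $\K(k)$ you pay to transmit $k$ cancels against the $-\K(k)$ hiding in $\log|S_x|\eqa k-\K(k)-\|m_x\|$, so you eliminate the $\K(\K(x))$ term altogether --- this is exactly the ``additional effort'' the paper alludes to in the remark immediately after the theorem, and your bound is strictly stronger than the stated one. Your self-referential term is $\K(\|m_x\|)$ rather than $2\log\|m_x\|$, but the dichotomy you give closes either version. The only point worth making explicit is prefix-freeness of the concatenated code: the decoder, having read $k$ and $\|m_x\|$ and having oracle access to $\ch$, can compute $N_k$, recover $m_x$ as a prefix of $N_k$, compute $|S_x|$ exactly, and hence knows how many index bits remain to be read, so the code is indeed self-delimiting and the bound on $\K(x|\ch)$ is legitimate.
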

\begin{proof}
    Let $\nu(y)= c [\K(y)\leq k]\m(y)2^{\|m_y\|}/(\|m_y\|^2)$. For proper choice of $c$, $\nu$ is a semimeasure and computable relative to $\ch$ and $k$. So $\K(x|\ch,k) \lea -\log\nu(x) \eqa \K(x) - \|m_x\| + 2\log \|m_x\|$.   
\end{proof}
Note that with some additional effort, the $\K(\K(x))$ term can be eliminated.
\begin{cor}
    For $x\in\FS$, $n=\K(x)$, for all $m\leq n$, $m\in\mathbb{W}$, there is a set $S\ni x$ such that $|S|=2^m$ and $\K(S) + m \lel n +\I(x;\ch).$ 
\end{cor}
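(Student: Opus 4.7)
The plan is to parametrize the construction of $S_x$ from Theorem~\ref{thr:main} so as to produce, for each target size $2^m$, a set $S\ni x$ of exactly that size. I would split on whether $m$ is at most $\log|S_x|$ or larger.

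For $m\leq \log|S_x|$, let $q$ be the prefix of $I^x_n$ of length $L-m$, where $L=\|N_n\|$ and $n=\K(x)$. Because $L-m\geq L-\log|S_x|=\|m_x\|+1$, the string $q$ extends $m_x0$; consequently every length-$L$ string with prefix $q$ is numerically below the smallest length-$L$ string with prefix $m_x1$, which is at most $N_n$. Hence $T=\{y:q\sqsubseteq I^y_n\}$ contains exactly $2^m$ elements and $x\in T$. Its complexity satisfies $\K(T)\lea \K(n)+\|q\|+O(\log\|q\|)=\K(n)+(L-m)+O(\log L)$, whence $\K(T)+m\lea \K(n)+L+O(\log L)\eqa n+O(\log n)\lel n+\I(x;\ch)$.

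For $\log|S_x|<m\leq n$ the set $S_n$ is too small for this prefix trick, so I would rerun the construction at a larger threshold $k'>n$. The proof of Theorem~\ref{thr:main} in fact works for every $k\geq \K(x)$: the semimeasure $\nu$ treats $k$ purely as a parameter, and passing from $\K(x\mid\ch,k)$ back to $\K(x\mid\ch)$ costs an extra $\K(k)$, so the analogous joint-prefix length at threshold $k$ obeys $\lea \K(k)+\I(x;\ch)+O(\log \I(x;\ch))$. Hence the analogue of $\log|S_x|$ at threshold $k$ grows essentially linearly with $k$, and picking the least $k'$ for which this analogue is at least $m$ forces $k'\lel m+\I(x;\ch)$. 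Applying the prefix construction from the first case inside $S_{k'}$ now produces a set $T\ni x$ of size exactly $2^m$ satisfying $\K(T)+m\lea k'+O(\log k')\lel n+\I(x;\ch)$.

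The main obstacle is verifying that Theorem~\ref{thr:main} extends uniformly to all $k\geq \K(x)$, not merely to $k=\K(x)$; this should follow by rereading the proof and tracking the $\K(k)$ term that appears when one replaces the conditional $\K(x\mid\ch,k)$ with $\K(x\mid\ch)$. With that generalization in hand, the two cases above combine to cover every $m\leq n$.
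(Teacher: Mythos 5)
The paper itself gives no explicit proof of this corollary, but its intended route is visibly the set $S_x$ together with the Proposition that follows (subdividing a set at cost $s+\K(s)$). Your first case is exactly that argument made concrete: cutting $S_x$ by lengthening the prefix of $I^x_n$ is the prefix-level form of applying the Proposition to $S_x$, and your size and complexity accounting ($2^m$ elements exactly because $q$ extends $m_x0$ and $m_x1\sqsubseteq N_n$; $\K(T)+m\lea\K(n)+L+O(\log L)\eqa n$) is correct. Where you genuinely diverge is the range $\log|S_x|<m\leq n$. Your plan --- rerun the construction at a higher threshold $k'$ --- works, but it requires the uniform-in-$k$ version of Theorem~\ref{thr:main} that you flag: this does go through (the semimeasure argument is uniform in $k$, and $\K(x|\ch)\lea\K(x|\ch,k)+\K(k)$ turns the $\K(\K(x))$ term into $\K(k)$), with two small corrections. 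First, you must take $k'\geq n$ so that $x\in S_{k'}$ has an index at all; second, the least admissible $k'$ is then bounded by $n+\I(x;\ch)$ plus logarithmic terms rather than by $m+\I(x;\ch)$ as you wrote (in this range $n-m\lel\I(x;\ch)$, so the discrepancy costs at most another $\I(x;\ch)$ inside an expression you only need to bound by $n+\I(x;\ch)$, and your final chain $\K(T)+m\lea k'+O(\log k')\lel n+\I(x;\ch)$ survives). A simpler route for this second range, more in the spirit of the paper's machinery, is padding: Theorem~\ref{thr:main} already gives $\K(S_x)\lea 2\K(\K(x))+\I(x;\ch)+O(\log\I(x;\ch))$, so the set $S=S_x\cup P$, with $P$ the lexicographically first $2^m-|S_x|$ strings outside $S_x$, satisfies $\K(S)\lea\K(S_x)+\K(m)$ and hence $\K(S)+m\lel n+\I(x;\ch)$ with no new machinery. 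Both approaches are valid; yours buys a uniform-in-$k$ strengthening of Theorem~\ref{thr:main} at the price of re-verifying the semimeasure construction at every threshold.
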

\begin{clm}
    Thus there exists a set $S\ni x$ such that $\K(S) \lel 2\K(\K(x))+\I(x;\ch)$ and $\K(S) + \log |S|\lea \K(x) +\K(\K(x))+ O(\log(\I(x;\ch)+\K(\K(x))))$. This fact combined with the following proposition characterizes the Kolmogorov Structure Function.
\end{clm}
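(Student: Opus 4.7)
The plan is to take $S = S_x$ from the construction immediately preceding Theorem \ref{thr:main} and combine the two bounds derived there with the bound on $\|m_x\|$ the theorem itself supplies. No new construction seems necessary: the same set witnesses both inequalities, since the $\|m_x\|$ quantity enters the two relevant estimates with opposite signs.

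For the first inequality, I would substitute the bound $\|m_x\| < \K(\K(x)) + \I(x;\ch) + O(\log \I(x;\ch))$ from Theorem \ref{thr:main} into the previously established $\K(S_x) \lea \K(k) + \|m_x\| + \K(\|m_x\|)$, using $\K(k) = \K(\K(x))$ since $k = \K(x)$. The term $\K(\|m_x\|)$ is $O(\log \|m_x\|)$, which by the theorem's bound is $O(\log(\K(\K(x))+\I(x;\ch)))$, i.e.\ absorbable into the $\lel$ slack. This yields $\K(S) \lel 2\K(\K(x)) + \I(x;\ch)$.

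For the second inequality, I would add the two setup bounds
\[
\K(S_x) \lea \K(k) + \|m_x\| + \K(\|m_x\|), \qquad \log|S_x| \eqa k - \K(k) - \|m_x\|,
\]
so that $\K(k)$ and $\|m_x\|$ cancel and one is left with $\K(S_x) + \log|S_x| \lea k + \K(\|m_x\|)$. Using the same control $\K(\|m_x\|) \lea O(\log(\I(x;\ch) + \K(\K(x))))$ gives the claimed inequality (in fact without the extra $\K(\K(x))$ term present in the statement, which is just a safe over-estimate).

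The argument is essentially bookkeeping once Theorem \ref{thr:main} is in hand; the main conceptual point, and the only thing worth highlighting, is that both desired bounds come from the same set $S_x$ because of the sign cancellation of $\|m_x\|$ between $\K(S_x)$ and $\log|S_x|$. I do not anticipate a genuine obstacle beyond keeping careful track of which error terms fall under $\lea$, $\lel$, or an explicit $O(\log(\cdot))$.
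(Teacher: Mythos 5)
Your proposal is correct and matches the paper's intended derivation: the claim is meant to follow immediately from the set $S_x$ constructed before Theorem \ref{thr:main} together with the theorem's bound on $\|m_x\|$, exactly as you do. Your observation that the cancellation of $\|m_x\|$ actually yields $\K(S)+\log|S|\lea \K(x)+O(\log(\I(x;\ch)+\K(\K(x))))$, making the stated $\K(\K(x))$ term a harmless over-estimate, is also accurate.
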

\begin{prp}
   Let $S\ni x$. For all $s<\log|S|$ there exists a set $S'\ni x$ such that $|S'|\leq |S|2^{-s}$ and $\K(S') \lea \K(S) + s + \K(s)$. 
\end{prp}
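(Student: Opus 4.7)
The plan is to partition $S$ into consecutive blocks of nearly equal size, in some canonical enumeration of $S$, and take as $S'$ the block that contains $x$; then a description of $S'$ consists of a description of $S$, together with $s$ and the index of the block.

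First I would fix an enumeration $y_0,y_1,\dots,y_{|S|-1}$ of $S$ computable from any program producing $S$. Choose the block size $b = 2^{\lfloor \log|S|\rfloor - s}$, which is a positive integer since $s<\log|S|$, and chop the enumeration into consecutive blocks $B_0,B_1,\dots$ each of length $b$ (with the last block possibly shorter). Every block has size at most $b \leq |S|2^{-s}$, and because $|S|<2^{\lfloor\log|S|\rfloor+1}$, the number of blocks is strictly less than $2^{s+1}$, so each block index fits in $s+1$ bits.

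Second, let $S' = B_{i_x}$ be the unique block containing $x$. To encode $S'$, concatenate a shortest program for $S$, a self-delimiting code for $s$ (of length $\K(s)+O(1)$), and the fixed-width $(s+1)$-bit binary representation of $i_x$. From these, the decoder reconstructs the enumeration of $S$, computes $b$ from $|S|$ and $s$, and reads off the $i_x$-th block. The total description length is $\K(S)+\K(s)+s+O(1)$, yielding $\K(S')\lea \K(S)+s+\K(s)$ as required.

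I do not expect a deep obstacle; the only point that needs care is the rounding in the block size, which must be simultaneously (a) at most $|S|2^{-s}$, to meet the cardinality bound on $S'$, and (b) large enough that the number of blocks is at most $2^{s+O(1)}$, so that the block index costs only $s+O(1)$ bits. The choice $b=2^{\lfloor\log|S|\rfloor-s}$ is designed to satisfy both at once, and everything else is routine.
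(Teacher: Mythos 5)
Your proof is correct: the paper states this proposition without giving a proof, and your argument --- partitioning a canonical enumeration of $S$ into at most $2^{s+1}$ blocks of size $2^{\lfloor\log|S|\rfloor-s}$ and describing the block containing $x$ by a program for $S$, a self-delimiting code for $s$, and an $(s+1)$-bit index --- is exactly the standard construction intended here (as in \cite{GacsTrVi01,VereshchaginViSt04}). The rounding choices you flag do work out as you claim, so there is nothing to correct.
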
\newpage
The minimal sufficient statistic for $x\in\FS$ is 
$$
\mathbf{k}^*(x)=\min \{ k:\H_k(x) + k = \K(x)\}.
$$
This is the location in which the Kolmogorov Structure Function reaches the boundary point and becomes flat. Due to Theorem \ref{thr:main}, $\mathbf{k}^*(x)\lel \K(\K(x))+\I(x;\ch)$ (but note that the $\K(\K(x))$ term can be eliminated). A visualization of the Kolmogorov Structure Function can be seen in Figure \ref{fig}.
\begin{figure}
	\begin{center}
		\includegraphics[width=0.8\columnwidth]{./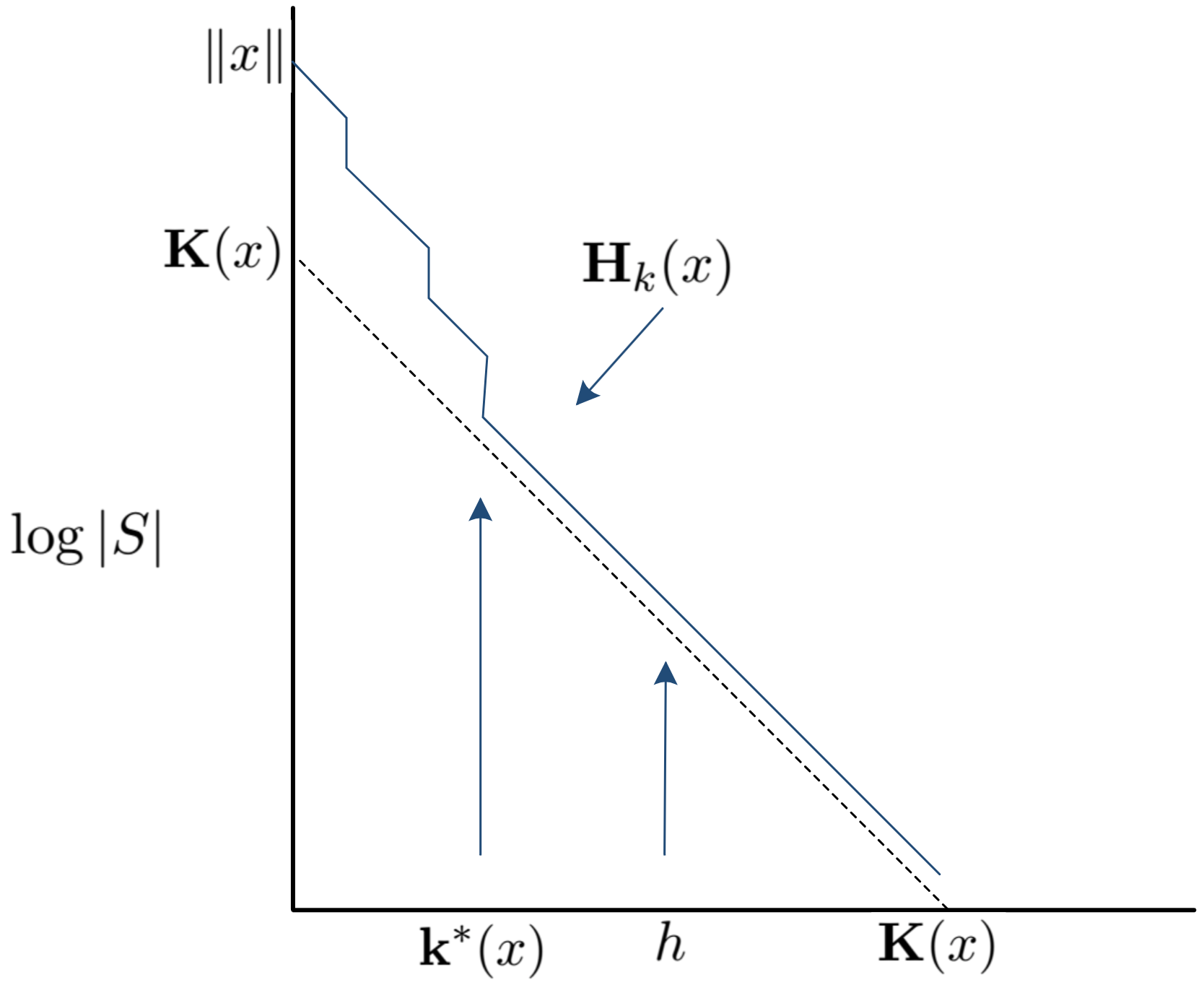}
		\caption{A visual representation of the Kolmogorov Structure Function $\H_k(x)$. The amount of information that the halting sequence has about $x$ is $h=\I(x;\ch)$. Since $h$ is negligible for almost all $x$, the Kolmogorov Structure Function is almost always flat.}
   \label{fig}
	\end{center}

 \end{figure}
\section{Set-Restricted Structure Functions}
\label{sec:setrest}
One potential method to create strings with non-simple Kolmogorov Structure Functions is to restrict the sets under consideration. Thus for a set of sets $\mathcal{S}$,
$$
\H_k^\mathcal{S}(x) = \min\{\log |S| : x\in S\in\mathcal{S},\K(S)\leq k\}.
$$
This would banish the pesky set $S_x$ defined in the last section. This was studied in Section 6 of \cite{Vereshchagin15}. However there is an inherent obstacle to proving such functions can have any shape. Proofs to statements (such as Theorem 10 in \cite{Vereshchagin15}) of such effect use 
a shape function $R$ to (non-recursively) construct a string $x$ whose structure function has that shape $R$ (up to a degree of precision depending on $\mathcal{S}$). Thus the proof can be thought of as a program to produce $x$ given $R$ and $\mathcal{H}$, with $\K(x|\ch)\lea \K(R)$. Thus proofs saying that for every shape $R$ there is a set $x$ such that $\H^\mathcal{S}_k(x)$ has shape $R$ (up to a certain precision) also implies that $\I(x;\ch)\gea\K(x)-\K(R)$. However, this obstacle does not preclude a proof of the existence of a large number of strings with profile $R$, which could potentially overcome the barrier described in this section.

In general, the Independence Postulate states if a string can be described by a small mathematical statement but has high Kolmogorov complexity then it cannot be found in the physical world. This presents an obstacle for constructive proofs in Algorithmic Information Theory.

\section{$\I(x;\ch)$ as an Error Term}
The Independence Postulate states one cannot find strings $x$ with nonnegligible $\I(x;\ch)$. Thus the term $\I(x;\ch)$ serves as a very good error term. Furthermore, $\I(x;\ch)$ enjoys the following deterministic and probabilistic conservation laws.\\

\noindent\textbf{Lemma.}
\begin{itemize}
\item \cite{Epstein22Exm22} For partial computable $f$, $\I(f(x);\ch)\lea \I(x;\ch)+\K(f)$.
\item \cite{EpsteinBirthday22} For probability $P$ over $\N$ computed by program $q$,\\ $\Pr_{a\sim P}[\I(a;\ch)>\I(q;\ch)+m] \lem 2^{-m}$.
\end{itemize}$ $\\
\vspace*{-0.6cm}

In addition, there are many provable statements about a mathematical construct $C$ with the following form
$$
\K(x(C)) \lel Q(C) + \I(C;\ch).
$$
The term $x(C)$ is some string associated with $C$. The term $Q(C)$ is some property about $C$. The term $\I(C;\ch)$ is the information $\ch$ has about the entire encoding of $C$. For example, as seen in, \cite{Epstein24}, let $C=\{(a_i,b_i)\}$ be a finite set of pairs of numbers, $x(C)$ be the simpliest total computable function consistent with $C$ and $Q(C)=\sum_i\K(b_i|a_i)$. One gets the following characterization of regression:\\

\noindent\textbf{Theorem.} $\K(x(\{(a_i,b_i)\})) \lel \sum_i\K(b_i|a_i)+\I(\{(a_i,b_i)\};\ch).$\newpage
\section{Minimum Description Length Principle}
\label{sec:mdl}
This Minimum Description Length Principle, \cite{Grunwald07}, is a principle to find regularity in information. When regularity is found, the data $D$ can be succinctly compressed. The set of permissible models is $\mathcal{M}$. The goal is to minimize the pair:
$$
\min_{M\in\mathcal{M}}L(M)+L(D|M).
$$
The term $L(M)$ is the length of the encoding of the model and the term $L(D|M)$ is the encoding of the data given the model. Typically, the set of permissible models $\mathcal{M}$ is severely limited and the encodings are efficiently computable. The term $L(D|M)$ can be thought of as the noise in the data $D$ given model $M$. Thus the expression represents tradeoff of the model complexity verses its descriptive power. The term $L(M)$ prevents overfitting of the data.

For example, take a very long string $x\in\FS$. A set of models $\mathcal{C}=\bigcup \mathcal{C}_k$ is all $k$th order Markov chains $\mathcal{C}_k$ on $\BT$. The term $L(M)$ for $M\in\mathcal{C}_k$ is all the parameters of a $k$th order Markov chain $M$. The term $L(x|M)$ is the negative logarithm of $x$ given $M$.

MDL is computable and has many practical applications whereas Algorithmic Statistics is a formal notion, providing theoretical results.
\section{Falsifiability}
In his book, \textit{The Logic of Scientific Discovery} \cite{Popper34}, the philosopher Karl Popper introduced the notion of falsiability, a deductive standard of evaluation of scientific theories and hypotheses. A theory or hypothesis (or in our case `model') is falsifiable if it can be contradicted by an empirical test. Popper proposed that falsiability is the indicator between scientific and non-scientific theories. 

For example, take meteorology and astrology. The complexity of astrology is not greater than the complexity of meteorology. Both theories fail in some of their predictions. However, consider the following assertion
\begin{quote}
    \textit{In the New York area, both a tropical storm and snowfall can happen in one hour.}
\end{quote}
According to the theory of meteorology, this is impossible. However, astrology does not preclude this possibility. Thus astrology is not falsifiable and not a scientific theory.

There is a connection between falsifiability and classification. A binary classification model $\mathcal{M}$ parameterized by $\theta$
shatters a set $S$ if for every binary assignment to the elements of $S$ there is a parameter $\theta$ that makes $\mathcal{M}$ completely consistent with the assignment. 

The VC dimension of $\mathcal{M}$ is the size of the largest set that is shattered by $\mathcal{M}$. The VC dimension can provide a probabilistic upper bound on the test error when using the model $\mathcal{M}$ on training data. Thus models that are not falsifiable will have an infinite VC dimension and no probabilistic upper bounds on the test error can be proved.

We now apply the notion of falsifiability to two part codes. Given a MDL pair $(\mathcal{H},L)$ consisting of a set of hypotheses $\mathcal{H}$ and a coding scheme $L$, the MDL estimator is of the form:
$$
\lambda_{x,\mathcal{H},L}(k)=\min\{L(H)+L(x|H):L(H)\leq k,H\in\mathcal{H}\}.
$$
From this definition, we get the following claim:
\begin{clm}
If MDL pair $(\mathcal{H},L)$ has corresponding function $\lambda_{x,\mathcal{H},L}$ that  reaches the $\K(x)$ line quickly for all $x$, then $(\mathcal{H},L)$ is not falsifiable.
\end{clm}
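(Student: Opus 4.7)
The plan is to unfold the hypothesis and show that ``quickly reaching the $\K(x)$ line'' forces each short hypothesis in $\mathcal{H}$ to be consistent with an exponential number of data strings at every length scale, so that no observation can refute any of them.

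First I would fix a precise reading of ``quickly'': there is a small threshold $k_0$ (say $k_0 \lel \K(\K(x))+\I(x;\ch)$, by analogy with the bound on the minimal sufficient statistic coming from Theorem \ref{thr:main}) such that for every $x$ some $H\in\mathcal{H}$ with $L(H)\leq k_0$ achieves $L(H)+L(x|H)\lea \K(x)$, and therefore $L(x|H)\lea \K(x)-L(H)$. I would then rephrase the target: in the VC-dimension framing recalled in the previous section, $(\mathcal{H},L)$ is falsifiable only if some $x$ is incompatible with every short $H$, i.e.\ forces $L(H)+L(x|H)$ to exceed $\K(x)$ by a non-negligible margin whenever $L(H)\leq k_0$. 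The quick-reaching assumption explicitly rules this out, so no single observation can serve as a falsifying test.

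Next I would make the VC-dimension obstruction quantitative with a Kraft/pigeonhole argument. There are at most $O(2^{k_0})$ hypotheses with $L(H)\leq k_0$, and for each such $H$ the prefix-code property of $L(\cdot|H)$ gives $\sum_x 2^{-L(x|H)}\leq 1$. Consider strings $x$ of length $N$ with $\K(x)\lea N$; there are $\gem 2^N$ of them, and by the quick-reaching assumption each one lies in the consistency set $W_H=\{x:L(x|H)\lea \K(x)-L(H)\}$ of some short $H$. Pigeonhole then forces at least one short $H$ to have $|W_H|\gem 2^{N-k_0}$, so that $H$ assigns a non-negligible code to exponentially many $x$ across every length scale. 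Consequently any finite sample can be realized inside some such $W_H$, the class $\{H:L(H)\leq k_0\}$ shatters arbitrarily large sets in the sense of the previous section, and by the VC--falsifiability link $(\mathcal{H},L)$ is not falsifiable.

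The main obstacle I expect is the informal word ``quickly'': once $k_0$ is pinned down to a genuinely sub-logarithmic function of $\K(x)$ the Kraft counting and the VC translation are routine, but if ``quickly'' is relaxed the pigeonhole no longer bites. A secondary subtlety is translating ``consistent with $x$'' symmetrically between the MDL code $L$ and the universal complexity $\K$; this is handled by the $\lea$ and $\lel$ tolerances already standard in the paper, and by noting that the mixture $\sum_{H:L(H)\leq k_0} 2^{-L(H)}2^{-L(x|H)}$ induced by $(\mathcal{H},L)$ dominates $\m(x)$ up to a factor $2^{O(k_0)}$, which is exactly the quantitative form of the non-falsifiability one wants to record.
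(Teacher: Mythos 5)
The paper does not actually prove this claim: it is stated as an informal consequence of the definitions, with ``quickly'' and ``falsifiable'' left at the level of Popper's criterion and the surrounding discussion. The entire intended content is what you write in your second paragraph: if $\lambda_{x,\mathcal{H},L}$ reaches the $\K(x)$ line at some small threshold for \emph{every} $x$, then every possible observation is near-optimally explained by some short $H\in\mathcal{H}$, so no observation can contradict the model class, which is precisely the negation of falsifiability. Up to that point your proposal coincides with the paper's (implicit) argument, and since neither ``quickly'' nor ``falsifiable'' is given a formal definition in the text, nothing more can honestly be extracted.

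The quantitative superstructure you build on top of this, however, contains genuine gaps. First, the pigeonhole conclusion that some short $H$ has $|W_H|\gem 2^{N-k_0}$ does not establish that $\{H:L(H)\leq k_0\}$ \emph{shatters} large sets: shattering requires realizing \emph{every} binary labeling of a set, whereas a single large consistency set only witnesses the all-positive labeling, so the step ``the class shatters arbitrarily large sets'' does not follow. Second, the VC link recalled in the paper runs in the direction ``not falsifiable $\Rightarrow$ infinite VC dimension''; inferring non-falsifiability \emph{from} unbounded shattering uses the converse, which is not asserted anywhere in the text, so even a repaired shattering argument would not close the loop. Third, a counting slip: by the cited fact $\log N_k\eqa k-\K(k)$, the number of strings of length $N$ with $\K(x)\lea N$ is $2^{N-\K(N)}$ up to constant factors, not $\gem 2^N$; this costs only $\K(N)\lel\log N$ bits and does not kill the pigeonhole, but it should be stated correctly. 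Your closing remark that the mixture $\sum_{H:L(H)\leq k_0}2^{-L(H)-L(x|H)}$ dominates $\m(x)$ is a correct and useful quantitative gloss on the hypothesis, but it is a restatement of ``reaching the $\K(x)$ line,'' not an independent route to non-falsifiability. The short version: keep your second paragraph, which is the whole proof the paper intends, and discard or substantially rework the third.
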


For example, in the example in Section \ref{sec:mdl}, the Markov model will be above the $\K(x)$ line for all 
 computationally simple prefixes $x$ of normal sequences. The story is different for the MDL pair $(\mathcal{S},K)$, where $\mathcal{S}$ is the set of all finite sets, $K(S)=\K(S)$, and $K(x|S)=[x\in S]\log|S|+[x\not\in S]\infty$. This pair is intimately connected to the structure function. Indeed, $\lambda_{x,S,K}$ is equal to the MDL estimator $\lambda_x$ in \cite{VereshchaginViSt04}.

For all $x\in\FS$ with $\I(x;\mathcal{H})\leq k$, $\lambda_{x,\mathcal{S},K}$ converges to the $\K(x)$ line at $\lel k$. Thus the pair $(\mathcal{S},K)$ is an optimal model for all (non-exotic) strings and is not falsifiable. Thus this pair is not a good theory for determining the structure of strings.

\section{Discussion}
The Independence Postulate \cite{Levin84,Levin13} states:

\begin{quote}
\textbf{IP}\textit{: Let $\alpha$ be a sequence defined with an $n$-bit mathematical statement (e.g., in PA or set theory), and a sequence $\beta$ can be located in the physical world with a $k$-bit instruction set (e.g., ip-address). Then $\I(\alpha:\beta)<k+n+c$ for some small absolute constant $c$.}
\end{quote}

When I first learned of \textbf{IP}, I didn’t realize how much of impact it could have on different fields of study. For example, \textbf{IP} and the Many Worlds Theory \cite{Everett57} are in conflict because measuring the spin of a million electrons results in the creation of a world where a large prefix of Chaitin’s Omega, $\Omega$, is found at a small address. Furthermore, \textbf{IP} causes issues in Constructor Theory \cite{Deutsch13}, which characterizes tasks in physics as either possible or impossible. This raises the question: “Is it possible or impossible to find large prefixes of $\Omega$?”. The answer causes trouble for either Constructor Theory or \textbf{IP}.

This note reiterates that \textbf{IP} implies Algorithmic Statistics does not study strings in the physical world. Thus the unrestricted structure function really doesn't say anything about good or bad models for a string. The set-restricted structure function might, but there are obstacles to showing this, as seen in Section \ref{sec:setrest}. This makes the connection between Algorithmic Statistics and the Minimum Description Length Principle \cite{Grunwald07} tenuous. 
This leads to a general thesis about separating strings into two part codes:
\begin{quote}
    \textit{Separating strings into two parts consisting of a model and noise requires substantial limitations on the group of models under consideration as well as their possible encodings.}
\end{quote}
The intention is not to denigrate the theory; a majority of my work (including \cite{EpsteinPhysics24,EpsteinOutliers23,Epstein23,Epstein24,EpsteinEL23}) is descendent from Algorithmic Statistics, particularly \cite{VereshchaginVi04}. My interpretation of the Kolmogorov Structure Function is that it (and its equivalent definitions) provide a means 
to characterize strings whose shortest programs have astronomically long running times. The Kolmogorov Structure Function (and its equivalent definitions) also provide a means to know that a string $x$ has high $\I(x;\ch)$.


\end{document}